\documentclass[lettersize,journal]{IEEEtran}
\usepackage{amsmath,amsfonts}

\usepackage{array}
\usepackage[caption=false,font=normalsize,labelfont=sf,textfont=sf]{subfig}
\usepackage{textcomp}
\usepackage{stfloats}
\usepackage{url}
\usepackage{verbatim}
\usepackage{graphicx}
\usepackage{cite}
\usepackage{xcolor}
\usepackage{hyperref}
\usepackage{amssymb}
\usepackage{algorithm}
\usepackage{algpseudocode}
\usepackage{amsmath,amssymb}
\usepackage[utf8]{inputenc}

\usepackage{amssymb}

\usepackage{caption}
\usepackage{enumitem}

\usepackage{subcaption}

\usepackage{amsthm}
\newtheorem{definition}{Definition}

\newtheorem{proposition}{Proposition}

\theoremstyle{remark}

\definecolor{blue}{RGB}{0 ,0, 205}

\begin{document}

\title{Energy Efficiency Maximization for Discrete Activation based NOMA-assisted Pinching-Antenna Systems}

\author{Yishi Zhang,~\IEEEmembership{Member,~IEEE}, Aditya Powari,~\IEEEmembership{Graduate Student Member,~IEEE},\\ Kaidi Wang,~\IEEEmembership{Member, IEEE}, Yaru Fu,~\IEEEmembership{Member,~IEEE} and Daniel K. C. So,~\IEEEmembership{Senior Member,~IEEE}

\thanks{Yishi Zhang, Aditya Powari, Kaidi Wang and Daniel K. C. So are with the Department of Electrical and Electronic Engineering, University of Manchester, Manchester, U.K. (e-mail: yishi.zhang@postgrad.manchester.ac.uk; aditya.powari@manchester.ac.uk; kaidi.wang@manchester.ac.uk; d.so@manchester.ac.uk).}%

\thanks{Yaru Fu is with School of Science and Technology, Hong Kong Metropolitan University, Hong Kong, China. (e-mail: yfu@hkmu.edu.hk).}%

}

\markboth{}
{Shell \MakeLowercase{\textit{et al.}}: A Sample Article Using IEEEtran.cls for IEEE Journals}

\IEEEaftertitletext{\vspace{-1.8\baselineskip}}
\maketitle

\begin{abstract}
Pinching-antenna systems is a promising architecture for flexible wireless communications, but energy efficiency (EE) maximization remains largely unexplored, as limited existing studies mainly focus on transmit power minimization. This paper investigates EE maximization in a downlink non-orthogonal multiple access (NOMA)-assisted PASS by explicitly modeling the pinching antenna (PA) activation power and jointly optimizing discrete PA activation and power allocation under both  quality-of-service and transmit power constraints.
To tackle the resulting mixed-integer nonlinear programming problem, a two-layer iterative algorithm is proposed with an EE-oriented matching-based PA activation and a low-complexity Dinkelbach-based power allocation with closed-form updates. 
Numerical results demonstrate that the proposed solution achieves substantial EE gains over the considered benchmark schemes, while exhibiting fast convergence. The impact of activation power has been analyzed and the significance of accounting it in EE maximization problem is also demonstrated. 
\end{abstract}


\begin{IEEEkeywords}
Energy efficiency, pinching antennas, non-orthogonal multiple access, antenna selection, power allocation. 
\end{IEEEkeywords}

\vspace{-1.8em}
\section{Introduction}
Pinching-antenna systems (PASS) is a promising new technology for future 6G “last-meter” connectivity by mechanically controlling pinching antenna (PA) locations and activations, to create strong and controllable line-of-sight-like links, reshape near-field propagation, and enable scalable array reconfiguration with low hardware complexity\cite{DOCOMO,Tutorial,PAP}.
This new flexible antenna paradigm has attracted a considerable amount of work mostly centered on spectral efficiency (SE) maximization \cite{Sumrate1, Matching, RSMA}, while studies on energy efficiency (EE) optimization remain limited. 
Existing EE-oriented optimization is frequently simplified to a transmit power minimization problem for single \cite{EEminimization0} and multiple waveguides scenarios \cite{EEminBF0,EEminBF1,JRBOM,Fu_GLOBECOM}.
Additionally, the SE–EE trade-off has been systematically investigated in \cite{SEEE-tradeoff0,SEEE-tradeoff1}.
Notably, while limited recent works have included the fixed circuit term \cite{EEmax0,EEuplink} in the EE formulation, the power consumption model remains incomplete since it either neglects the additional power consumed by PA activation, such as the pinching action, or simply treats this term as a predetermined constant \cite{SEEE-tradeoff0, IncludePact} rather than evaluating its impact.

%
In addition, if PAs can be activated at any location along the waveguide as in most existing work, the moving power consumption should also be included. To avoid this potentially substantial power consumption and to simplify deployment complexity, we consider the discrete PA approach where PAs are activated at pre-installed finite locations.
Moreover, since multiple PAs activated on the same waveguide must be fed with the same signal, non-orthogonal multiple access (NOMA) provides a natural means of serving multiple users over the shared waveform and has demonstrated superior performance in PASS~\cite{Tutorial,PAP}.

This paper therefore investigates the EE maximization problem for downlink NOMA-assisted PASS with discrete PA activation power, explicitly accounting for the associated activation power. The contributions of this paper are listed as follows:
\begin{enumerate}[label=(\roman*), leftmargin=1.4em, itemsep=0.2em]
\item A realistic power consumption model which includes the static and PA activating power is proposed, where the number of activated PAs is treated as an optimization variable. 

\item A two-layer optimization algorithm is proposed to solve the EE maximization problem with an EE-oriented matching algorithm PA activation, and a low-complexity Dinkelbach-based algorithm, with closed-form updates. 
The impact of the PA activation power on the power allocation coefficients is further characterized.

\item The proposed algorithm achieves superior EE performance over the compared benchmarks with fast convergence. The significance of PA activation power in EE maximization is demonstrated.
\end{enumerate}

\vspace{-1.2em}
\section{System Model And Problem Formulation}

\begin{figure}[!t]
\centering
\includegraphics[width=0.9\columnwidth]{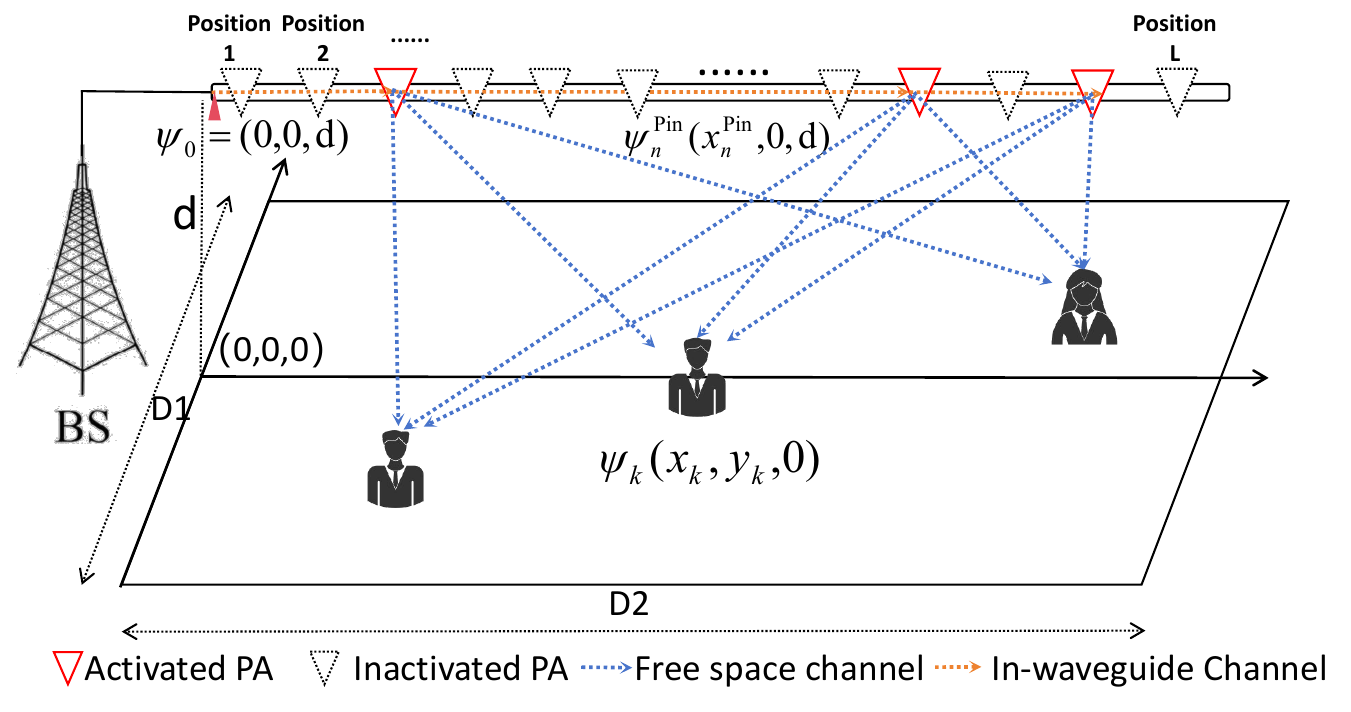}
\caption{System model for downlink NOMA-assisted discrete PASS.}
\label{system model}
\end{figure}

A rectangular region of width $D_1$ along the $y$-axis and length $D_2$ along the $x$-axis, as illustrated in Fig.~\ref{system model} is considered. $d$ is the height of the waveguide. The user $k$ is located at $(x_k, y_k, 0)$ with $K$ denoting the number of users. 
$N_\mathrm{m}$ PAs are pre-configured and uniformly placed along the waveguide,
the index set of equally-spaced PA positions is $\mathcal{L} = \{1, \cdots, l, \cdots, L\}$, with $L = N_\mathrm{m}$.
During transmission, only a subset of PAs is activated, whose label set is denoted by
$\mathcal{N}_{\mathrm{act}}$. The location of PA $n$ is represented by $\boldsymbol{\psi}_n^{\mathrm{Pin}} = (x_n, 0, d)$, $n \in \mathcal{N}_\mathrm{act}$. 

\subsection{NOMA PASS System Model and Power Consumption Model}

In this system, users are ordered in the ascending order of channel gain to determine the successive interference cancellation (SIC) decoding order. Let $\boldsymbol{\pi}=[\pi_1,\ldots,\pi_k,\ldots,\pi_K]$ denotes a permutation of the original user indices, where $\pi_k$ is the user index at the $k$-th SIC decoding stage.
Let $g_\textit{k,n}=h_n \beta_{k,n}$ denotes the channel between the feed point and user $k$ through PA $n$ , where $h_n$ is the equivalent effective channel from feedpoint $\psi_0$ centred at $(0,0,d)$ to PA $n$ along the waveguide and
$\beta_{k,n}$ denotes the free-space propagation from PA $n$ to user $k$. Hence, 
\begin{equation}
g_{k,n}
=
\left(
\sqrt{10^{-\kappa L_n/10}}
\, e^{-j \tfrac{2\pi}{\lambda_g} L_n}
\right)
\left(
\frac{\lambda}{4\pi d_{k,n}}
\, e^{-j \tfrac{2\pi}{\lambda} d_{k,n}}
\right),
\label{eq:channel}
\end{equation}
where $d_{k,n}=|\psi_k-\psi_n^\text{Pin} |$ denotes the distance between user $k$ and PA $n$, $L_n=|\psi_0-\psi_n^\text{Pin} |$ represents the distance between PA $n$ to feedpoint $\psi_0$, $\lambda$ is the free-space wavelength, $\lambda_g = \lambda/n_{\mathrm{eff}}$ is the guided wavelength with $n_\mathrm{eff}$ as the effective refractive index and $\kappa$ is the attenuation factor (in dB/m) along the waveguide. At the user terminal, the received channel is a composite channel formed by the superposition of multiple links via the $N$ activated PAs. Therefore, the effective channel power gain for user $k$ is
$|g_k|^2 = \left| \sum_{n\in \mathcal{N}_\mathrm{act}} g_\textit{k,n}  \right|^2 =\left| \sum_{n\in \mathcal{N}_\mathrm{act}} h_n \beta_{k,n} \right|^2$.
Accordingly, the sorted users satisfy 
\begin{equation}
|g_1|^2 \le |g_2|^2 \le \cdots \le |g_K|^2.
\end{equation}

The superimposed signal transmitted from the BS is $\sum_{k=1}^{K} \sqrt{\alpha_k}s_k$, where $s_k$ is the signal for user $k$ with unit energy. By assuming that the transmit power of
all PAs is equally distributed across the $N$ activated PAs \cite{PAP,Sumrate1,Matching}, the downlink signal received by user $k$ is given by
\begin{equation}
y_k = \sqrt \frac{P_t\alpha_k}{N}  g_k s_k + \sum_{j=1,j \neq k}^{K}\sqrt \frac{P_t\alpha_j}{N}  g_k  s_j + n_k,
\end{equation}
where $n_k$ is the additive white Gaussian noise with power $\sigma^2 = N_0 B$, where $N_0$ is the noise power spectral density and $B$ is the total bandwidth of the system. 

Considering antenna activation, the achievable rate of user $k$ is given by
\begin{equation}
R_{k}(\mathcal{N}_\mathrm{act}) = B \log_2 \left( 1 + \frac{\frac{P_t}{N} |g_k|^2 \alpha_k }{\frac{P_t}{N} |g_k|^2 \sum_{j=k+1}^{K} \alpha_j + \sigma^2} \right),
\label{eq:rateNOMA}
\end{equation}
where ${P_t}$ is the maximum transmit power budget. 

The total power consumption of the considered PASS is modeled as
\begin{equation}
    P_{\mathrm{tot}}
    = P_{\mathrm{static}}
    + \frac{P_t}{\eta}\sum_{k=1}^K \alpha_k
    + N P_\text{act},
\end{equation}
where $\alpha_k$ is the power allocation factor\footnote{Note that $\sum_{k=1}^K \alpha_k \le 1$, such that the actual transmit power can be lower than the maximum budget.} for user $k$, $\eta$ is the power amplifier efficiency, and $P_{\mathrm{static}}$ is the static circuit power. Additionally, to characterize the power consumption in PASS more precisely, the power consumption for activating one is denoted as $P_\text{act}$, which scales linearly with the number of activated PAs $N=|\mathcal{N}_\mathrm{act}|$.

\subsection{Problem Formulation}
The EE maximization problem in NOMA is formulated as
\begin{subequations}
\begin{align}
\max_{\mathcal{N}_\mathrm{act}, \boldsymbol{\pi}, \boldsymbol{\alpha}} \quad &
\frac{ \sum_{k=1}^{K} R_{k}(\mathcal{N}_\mathrm{act}) }
{P_{\text{tot}}}
\label{eq:EE_objNOMA}
\\
\text{s.t.}
\quad &
R_k(\mathcal{N}_\mathrm{act}) \ge R_k^{\min},  \forall k,
\label{eq:QoS_constraint}
\\\quad & 
\sum_{k=1}^K \alpha_k \le 1, \alpha_k \ge 0, \forall k ,
\label{eq:sum_alpha}
\\\quad & 
\mathcal{N}_{\mathrm{act}}\subseteq\mathcal{L}, \;|\mathcal{N}_{\mathrm{act}}| \le N_m.
\label{eq:NNN}
\end{align}
\end{subequations}
where $\boldsymbol{\alpha} = [\alpha_1,\ldots,\alpha_K]$ is the power allocation coefficients vector. Constraint~\eqref{eq:QoS_constraint} guarantees the minimum rate of each user and \eqref{eq:sum_alpha} is the total power constraint. Constraint~\eqref{eq:NNN} guarantees that activated PAs are selected from the pre-configured PAs and their number cannot exceed the available PA count. The proposed two-layer algorithm for PA activation and power allocation is presented in the following sections.


\section{EE-Oriented Matching for PA Activation}
\label{sec:EE_matching}
\subsection{One-sided One-to-One Matching Model}
To employ the matching algorithm, we introduce $\mathcal{N}=\{1,\cdots,n,\cdots,N_\mathrm{m}\}$ as the label set of all PAs.
We then model the assignment between PAs label set $\mathcal{N}$ and positions $\mathcal{L}$ as a one-sided one-to-one matching $(\mathcal{N}, \mathcal{L}, \succ, \Phi_0)$,  
where $\succ$ denotes a list of strict preferences of the PAs over the positions and $\Phi_0$ is the initial matching which is randomly generated.

\begin{definition}[One-sided one-to-one matching \cite{Matching}]
\label{def:matching}
A matching $\Phi$ is a mapping $\Phi:\mathcal{N}\to\mathcal{L}\cup\{0\}$ such that
\begin{enumerate}[label=(\arabic*)] 
\item $\Phi(n)\in \mathcal{L}\cup\{0\},\ \forall n\in\mathcal{N};$ 
\item $|\Phi(n)|\in\{0,1\},\ \forall n\in\mathcal{N};$ and 
\item $\Phi(n)=l \Rightarrow \Phi(n')\neq l,\ \forall n,n'\in\mathcal{N}.$ 
\end{enumerate}
\end{definition}
Conditions in Definition~\eqref{def:matching} ensure that each PA is assigned to at most one position or remains unmatched, while each position is occupied by at most one PA. Specifically, $\Phi(n)=l$ indicates that the PA at position $l$ is labeled as PA $n$ and activated. In contrast, $\Phi(n)=0$ indicates that no PA is assigned to index $n$; hence $|\Phi|=N$. Additionally, we use the 2-tuple $\langle \Phi(n),\Phi\rangle$ to represent the assignment of PA $n$ under matching $\Phi$.


\vspace{-0.8em}
\subsection{EE Utility Function}
With mapping $\Phi$ and the corresponding $\boldsymbol{\pi}$, the data rate of user $k$ is represented by $R_{k}(\Phi)$. The EE utility function is thus defined as
\begin{equation}
U(\Phi)\triangleq
\frac{\sum_{k=1}^{K}R_{k}\big(\Phi\big)}
{P_{\mathrm{static}}+\frac{P_t}{\eta}\sum_{k=1}^K \alpha_k^\star + |\Phi| P_{\mathrm{act}}},
\label{eq:utility-EE-TWC}
\end{equation}
where $\alpha_k^\star$ denotes the optimal power allocation factor which will be presented in Section~\uppercase\expandafter{\romannumeral4}.
In the case that PA $n \in \mathcal{N}$ prefers position $l \in \mathcal{L}$ over its current state, this strict preference can be characterized as
\begin{equation}
\left\langle \Phi(n),\,\Phi \right\rangle \prec_n \left\langle l,\,\Phi' \right\rangle \; \Leftrightarrow \; U(\Phi) < U(\Phi'),
\label{eq:preference}
\end{equation}
which means $\Phi'$ is strictly preferred to $\Phi$ if $U(\Phi')>U(\Phi)$ \cite{Matching}.%

\vspace{-0.8em}
\subsection{EE-Oriented Updates and Properties}
Since the channel gains vary with the matching $\Phi$, whenever $\Phi$ is updated, $\boldsymbol{\pi}$ is re-determined accordingly. The SIC order at iteration $i$ is represented by $\boldsymbol{\pi}_{i}$, the update of 
$\boldsymbol{\pi}$ is denoted as $\boldsymbol{\pi}_{i-1} \gets \boldsymbol{\pi}_{i}$, and $\boldsymbol{\pi}_{0}$ is obtained based on $\Phi_0$.

At convergence, the final matching $\Phi^\star$ satisfies
\begin{equation}
\nexists, \Phi',
\quad \text{s.t.} \quad
U(\Phi') > U(\Phi^\star).
\label{eq:EE_stability}
\end{equation}
Once the algorithm converges to the final matching $\Phi^\star$, the activated PA label set used in the system model is obtained as $\mathcal{N}_{\mathrm{act}}=\{n\in\mathcal{N}\mid \Phi^\star(n)\neq 0\}$.
%

\noindent\hspace*{1.4em}1) \textit{Complexity:} In each outer iteration, the algorithm scans all candidate activation updates, and thus at most $N_\mathrm{m}$ candidate states are evaluated. For a given number of outer iterations $C$, the complexity of the outer search is on the order of $\mathcal{O}(CN_\mathrm{m}^2)$.
Moreover, each matching update requires solving the inner power allocation problem. Denoting the complexity of this inner procedure by $\mathcal{C}_{\mathrm{in}}$, the overall complexity of the proposed algorithm is given by $\mathcal{O}\!\left(C N_\mathrm{m}^2\,\mathcal{C}_{\mathrm{in}}\right)$.
Therefore, the total complexity grows linearly with the number of outer and inner iterations and quadratically with $N_\mathrm{m}$.

\noindent\hspace*{1.4em}2) \textit{Convergence:} Starting from the initial activation state $\Phi_0$, the proposed matching algorithm generates a sequence of activation states
$\Phi_0 \rightarrow \Phi_1 \rightarrow \cdots \rightarrow \Phi_{\mathrm{final}}$.
In each update, the current state is replaced only when the new matching provides a higher utility.
Consequently,
$U(\Phi_0) < U(\Phi_1) < \cdots < U(\Phi_{\mathrm{final}})$.
Since the numbers of PA is finite, the number of all possible activation patterns is finite. Therefore, the proposed algorithm is guaranteed to converge to a final activation state in a finite number of iterations, where no further utility improvement is possible.


\begin{algorithm}[t]
\caption{Outer Layer Matching Algorithm}
\label{alg:EE_matching}
\begin{algorithmic}[1]
\State \textbf{Initialization:} $\Phi_{0}$, $U(\Phi_{0})$, $\boldsymbol{\pi_{0}}$, $N_\mathrm{m}$, $L$ and set $i\gets 0$.
\State \textbf{Repeat}: $i \gets i+1$
    \For{$n = 1,2,\ldots,N_\mathrm{m}$}
        \For{each $l \in \mathcal{L}$}
            \If{$\Phi(n') \neq l,\ \forall\, n' \in \mathcal{N}$}
                \If{$\left\langle \Phi(n),\,\Phi \right\rangle \prec_{n}
                    \left\langle l,\,\Phi' \right\rangle$} \label{ln:cond1} 
                    \State Set $\Phi \gets \Phi'$.
                    \State $\boldsymbol{\pi}_{i-1} \gets \boldsymbol{\pi}_{i}$, $ U(\Phi_{i}) \gets U(\Phi')$. \label{up1}
                \EndIf
            \Else
                \If{$\Phi(n) = l$}
                    \If{$\left\langle \Phi(n),\,\Phi \right\rangle \prec_{n}
                        \left\langle 0,\,\Phi' \right\rangle$} \label{ln:cond2} 
                        \State Set $\Phi \gets \Phi'$.
                        \State $\boldsymbol{\pi}_{i-1} \gets \boldsymbol{\pi}_{i}$, $ U(\Phi_{i}) \gets U(\Phi')$. \label{up2}
                    \EndIf
                \EndIf
            \EndIf
        \EndFor
    \EndFor
\State \textbf{Until \eqref{eq:EE_stability} is satisfied.}
\State\Return $\boldsymbol{\pi}^{\star} \gets \boldsymbol{\pi}_{i}$ and
        $\Phi^{\star} \gets \Phi_{i}$

\end{algorithmic}
\end{algorithm}

\section{Power Allocation Optimization}

Since Algorithm~\ref{alg:EE_matching} requires repeated calculations of $U(\Phi)$, the inner-layer power allocation must be solved efficiently. Therefore, the power allocation optimization subproblem with given $\Phi$ and $\boldsymbol{\pi}$ is solved by a Dinkelbach-based fast algorithm, where each iteration admits a closed-form update.


\vspace{-0.8em}
\subsection{Derivation for $\alpha_K^\star$}
The suboptimal objective function is
\begin{subequations}
\label{SingleClusterPower}
    \begin{align}
\max_{\boldsymbol{\alpha}} \quad &
\eqref{eq:EE_objNOMA}, \label{eq:EE_NOMA_pa}
\\ \text{s.t.} \quad &
\eqref{eq:QoS_constraint}, \eqref{eq:sum_alpha}.
\end{align}
\end{subequations}
Problem~\eqref{eq:EE_NOMA_pa} is a concave-convex fractional program, hence the solution can be obtained by using Dinkelbach algorithm.

\begin{proposition}
\label{thm:alpha_k}
If $\alpha_K$ is obtained, for any user $j,j \ne K$, an affine structure $\alpha_j = a_j \alpha_K + b_j$ can be utilized to represent $\alpha_j$, where $a_j$ and $b_j$ are constants.
\end{proposition}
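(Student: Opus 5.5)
The plan is to exploit the structure of the optimal power allocation in single-cluster downlink NOMA for EE maximization. The standard fact is that at optimum every user except the strongest one (user $K$, decoded last in the SIC order $\boldsymbol{\pi}$) has its QoS constraint \eqref{eq:QoS_constraint} active. The reason is that the weaker users' rates give a worse marginal return per unit of transmit power than user $K$'s rate. I will first argue this by an exchange step. Suppose some $j<K$ had $R_j>R_j^{\min}$. Lower $\alpha_j$ slightly and add the same amount to $\alpha_K$, keeping $\sum_k\alpha_k$ fixed. Because $|g_j|^2\le|g_K|^2$, the sum rate does not decrease. User $K$ sees no interference, so its rate rises. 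The rates of users $i<j$ only improve, since the interference they see, $\sum_{m>i}\alpha_m$, is unchanged. Meanwhile $P_{\mathrm{tot}}$ is unchanged, so the EE does not decrease. Hence I may take $R_j=R_j^{\min}$ for all $j\neq K$ without loss of optimality.

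With the constraints active, set $\gamma_j \triangleq 2^{R_j^{\min}/B}-1$ and $\rho\triangleq P_t|g_j|^2/(N\sigma^2)$, the latter depending on $j$. The equality $R_j=R_j^{\min}$ in \eqref{eq:rateNOMA} then becomes linear in $\boldsymbol{\alpha}$:
\begin{equation*}
\alpha_j=\gamma_j\sum_{m=j+1}^{K}\alpha_m+\frac{\gamma_j N\sigma^2}{P_t|g_j|^2}.
\end{equation*}
I will solve this by backward induction on $j=K-1,K-2,\dots,1$. The base case is $j=K-1$: $\alpha_{K-1}=\gamma_{K-1}\alpha_K+\gamma_{K-1}N\sigma^2/(P_t|g_{K-1}|^2)$. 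This gives $a_{K-1}=\gamma_{K-1}$ and $b_{K-1}=\gamma_{K-1}N\sigma^2/(P_t|g_{K-1}|^2)$.

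For the inductive step, assume $\alpha_m=a_m\alpha_K+b_m$ for all $j<m<K$, and set $a_K=1$, $b_K=0$. Substituting into the linear equation gives
\begin{equation*}
\alpha_j=\gamma_j\Big(\sum_{m=j+1}^{K}a_m\Big)\alpha_K+\gamma_j\sum_{m=j+1}^{K}b_m+\frac{\gamma_jN\sigma^2}{P_t|g_j|^2},
\end{equation*}
which is again affine in $\alpha_K$. This yields the recursion $a_j=\gamma_j\sum_{m>j}a_m$ and $b_j=\gamma_j\big(\sum_{m>j}b_m+N\sigma^2/(P_t|g_j|^2)\big)$. A closed form follows by telescoping: $a_j=\gamma_j\prod_{m=j+1}^{K-1}(1+\gamma_m)$, and $b_j$ is obtained similarly. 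For fixed $\Phi$ and $\boldsymbol{\pi}$, the quantities $a_j,b_j$ depend only on the channel gains, $N$, $P_t$, $\sigma^2$ and the rate targets. They are therefore constants with respect to the optimization variable $\alpha_K$, which proves the proposition.

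The main obstacle is rigorously justifying that the QoS constraints of users $1,\dots,K-1$ are tight at the EE optimum. Unlike sum-rate maximization, EE involves a fraction, and one might worry that lowering total power helps. However, the exchange argument above keeps $\sum_k\alpha_k$, and hence the denominator, fixed. It is therefore enough to show that the numerator is non-decreasing under the transfer. I will verify this by differentiating $R_j+R_K$ with respect to the transfer amount and using $|g_j|^2\le|g_K|^2$. Rates of the other users are unaffected except user $j$'s own, and users $j+1,\dots,K-1$ see a decrease in their interference from user $j$? No: they decode after $j$ and do not see $\alpha_j$ as interference, while users $i<j$ see unchanged total interference. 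So the transfer is feasible and EE-monotone, and the argument closes.
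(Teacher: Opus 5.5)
Your overall route is the paper's: make the QoS constraints of users $1,\dots,K-1$ active, then unroll $\alpha_j=\gamma_j\sum_{m>j}\alpha_m+C_j$ backward into $\alpha_j=a_j\alpha_K+b_j$. Your recursion and the closed form $a_j=\gamma_j\prod_{m=j+1}^{K-1}(1+\gamma_m)$ are correct and consistent with $A_{\mathrm{sum}}=(1+\gamma)^{K-1}$. The paper does not prove the tightness step; it imports it from Rezvani et al. You try to prove it yourself, and that is where the gap is.

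Your exchange moves $\delta$ from $\alpha_j$ to $\alpha_K$. Every user $i$ with $j<i<K$ has interference term $\sum_{m>i}\alpha_m$, which contains $\alpha_K$ but not $\alpha_j$. Its interference therefore grows by $\delta$, and $R_i$ strictly decreases. Your closing check only notes that these users do not see $\alpha_j$, and misses that they do see $\alpha_K$. The sum rate in fact still does not decrease, by a telescoping argument, but feasibility is lost. If some $i\in\{j+1,\dots,K-1\}$ has $R_i=R_i^{\min}$, which is exactly the configuration you are driving toward, the perturbed point violates \eqref{eq:QoS_constraint}. For example, with $K=3$, user $1$ slack and user $2$ tight, the transfer $1\to 3$ makes user $2$ infeasible. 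So your argument is valid only for $j=K-1$, i.e. essentially only for $K=2$.

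The fix is to transfer to the adjacent user $j+1$ instead. Write $q_i=\sum_{m\ge i}\alpha_m$ and $\rho_i=P_t|g_i|^2/(N\sigma^2)$, so that $R_i=B\log_2\frac{1+\rho_i q_i}{1+\rho_i q_{i+1}}$.

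Moving $\delta$ from $\alpha_j$ to $\alpha_{j+1}$ changes only $q_{j+1}$. Hence only $R_j$ (down) and $R_{j+1}$ (up) change, and $P_{\mathrm{tot}}$ is fixed. The sum rate depends on $q_{j+1}$ only through $B\log_2\frac{1+\rho_{j+1}q_{j+1}}{1+\rho_j q_{j+1}}$, which is nondecreasing in $q_{j+1}$ because $\rho_{j+1}\ge\rho_j$.

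Now sweep $j=1,2,\dots,K-1$ in order, stopping each transfer when $R_j=R_j^{\min}$. Each step leaves users $i<j$ untouched, since they depend only on $q_i$ and $q_{i+1}$ with $i+1\le j$. Every weak user ends up tight, and $R_K$ only increases. This gives the ``without loss of optimality'' tightness you need, and your induction then goes through unchanged.
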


\begin{proof}
Since $\Phi$ is the existing matching state, $N = |\Phi|$. The SINR for user $k$ to meet $R_k^\text{min}$ is denoted as $\gamma_k = 2^{\frac{R_k^{\min}}{B}} - 1$. Then $\alpha_k, \forall k$ can be represented by $\alpha_k \ge \gamma_k \sum_{j>k}^{K-1}\alpha_j + C_k$, 
where $C_k = \gamma_k \frac{\sigma^2 N}{P_t |g_k|^2}$.
Subproblem~\eqref{SingleClusterPower} is equivalent to a single-cluster downlink NOMA EE maximization problem. Following the optimality insight in \cite{Optimal power allocation}, the power coefficients of users $k\neq K$ should be adjusted only to maintain $R_k^{\min}$. 
Therefore, for user $k$, $\forall k \ne K$, we can obtain 
\begin{equation}
\alpha_k = \gamma_k \sum_{j>k}^{K}\alpha_j + C_k, \forall k, k \ne K.
\label{eq:alpha_min_recursive}
\end{equation}
Hence, $\alpha_k$ admits an affine structure.
\end{proof}

\begin{proposition}
\label{thm:opt_pa}
Objective function~\eqref{eq:EE_NOMA_pa} can be transformed into a function of $\alpha_K$.
\end{proposition}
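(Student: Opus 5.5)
Building on Proposition~\ref{thm:alpha_k}, the plan is to first make the affine maps $\alpha_j=a_j\alpha_K+b_j$ explicit by backward recursion and then substitute them into the numerator and denominator of \eqref{eq:EE_NOMA_pa}. Starting from $k=K-1$ and moving down to $k=1$, \eqref{eq:alpha_min_recursive} gives
\begin{equation*}
\alpha_k=\gamma_k\Big(\alpha_K+\sum_{j=k+1}^{K-1}(a_j\alpha_K+b_j)\Big)+C_k ,
\end{equation*}
so by induction
\begin{equation*}
a_k=\gamma_k\Big(1+\sum_{j=k+1}^{K-1}a_j\Big),\qquad b_k=\gamma_k\sum_{j=k+1}^{K-1}b_j+C_k ,
\end{equation*}
with $a_K=1$, $b_K=0$. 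In closed form this is $a_k=\gamma_k\prod_{j=k+1}^{K-1}(1+\gamma_j)$. I will carry these constants explicitly, since they depend only on $\Phi$ and $\boldsymbol{\pi}$ through $N$ and $|g_k|^2$.

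Next I rewrite the rates. For $k\neq K$, constraint \eqref{eq:QoS_constraint} holds with equality under \eqref{eq:alpha_min_recursive}, so $R_k=R_k^{\min}$ is a constant independent of $\alpha_K$. For user $K$ there is no interference term, and
\begin{equation*}
R_K=B\log_2\!\Big(1+\tfrac{P_t|g_K|^2}{N\sigma^2}\alpha_K\Big).
\end{equation*}
The numerator therefore becomes
\begin{equation*}
\sum_{k<K}R_k^{\min}+B\log_2\!\big(1+\rho_K\alpha_K\big),\qquad \rho_K=\tfrac{P_t|g_K|^2}{N\sigma^2}.
\end{equation*}
The denominator becomes
\begin{equation*}
P_{\mathrm{static}}+NP_{\mathrm{act}}+\tfrac{P_t}{\eta}\Big(\alpha_K\sum_{k}a_k+\sum_k b_k\Big),
\end{equation*}
which is affine in $\alpha_K$. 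Hence the objective is a concave-over-affine function of the single scalar $\alpha_K$.

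Finally I translate the constraints. The sum-power constraint \eqref{eq:sum_alpha} becomes
\begin{equation*}
\alpha_K\le \Big(1-\sum_k b_k\Big)\Big/\sum_k a_k .
\end{equation*}
User $K$'s QoS becomes $\alpha_K\ge C_K$. Nonnegativity of the other $\alpha_j$ is automatic, since $a_j,b_j\ge0$. This yields a one-dimensional interval on which the fractional program is posed and can be handled by Dinkelbach updates in closed form.

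The main obstacle is justifying that restricting to the affine manifold loses no optimality, i.e., that every user $k\neq K$ is tight at optimum; this rests on Proposition~\ref{thm:alpha_k} and the cited insight. I will additionally argue it directly: shifting power from any user $k<K$ with slack to user $K$ keeps the total power, keeps all QoS constraints satisfied, and strictly increases the sum rate.
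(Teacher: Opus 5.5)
Your reduction is the paper's own proof. The paper also substitutes the affine maps of Proposition~\ref{thm:alpha_k} into the ratio and obtains $f(\alpha_K)=(K-1)R_k^{\min}+B\log_2\bigl(1+\frac{P_t|g_K|^2}{N\sigma^2}\alpha_K\bigr)$ over $D_0+D_1\alpha_K$, but only for $\gamma_k\equiv\gamma$. In that case your $\sum_k a_k=\prod_{j=1}^{K-1}(1+\gamma_j)$ and $\sum_k b_k$ reduce to its $A_{\mathrm{sum}}=(1+\gamma)^{K-1}$ and $B_{\mathrm{sum}}=\sum_{i=1}^{K-1}C_i(1+\gamma)^{i-1}$, and your interval for $\alpha_K$ is its $[\alpha_K^{\min},\alpha_K^{\max}]$.

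The step that fails is the direct tightness argument you add. By \eqref{eq:rateNOMA}, user $j$ is interfered by $\sum_{i>j}\alpha_i$, which contains $\alpha_K$ but not $\alpha_k$ whenever $k<j<K$. So moving $\delta$ from a slack user $k$ to user $K$ lowers the SINR of every user strictly between them. It therefore breaks any of their QoS constraints that is tight, which is exactly the configuration you need to exclude (e.g.\ $K=3$, user~1 slack, user~2 tight). The shift is only safe when no intermediate user is tight, e.g.\ $k=K-1$. Also, ``strictly increases'' fails when channel gains tie.

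A correct argument uses cumulative powers $S_j=\sum_{i\ge j}\alpha_i$ and $\rho_j=\frac{P_t|g_j|^2}{N\sigma^2}$, so that $C_j=\gamma_j/\rho_j$ and $R_j=B\log_2\frac{1+\rho_jS_j}{1+\rho_jS_{j+1}}$. Then:
\begin{itemize}
\item the QoS of user $j<K$ reads $S_{j+1}\le (S_j-C_j)/(1+\gamma_j)$, a bound increasing in $S_j$;
\item the sum rate equals $B\log_2(1+\rho_1S_1)+B\sum_{j=2}^{K}\log_2\frac{1+\rho_jS_j}{1+\rho_{j-1}S_j}$, which is nondecreasing in each $S_j$, $j\ge2$, because $\rho_j\ge\rho_{j-1}$;
\item the EE denominator depends on $S_1$ only.
\end{itemize}
Hence, for fixed $S_1$, saturating the bounds successively gives the componentwise largest feasible $(S_2,\dots,S_K)$. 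It has the largest $S_K$, so user $K$'s constraint $S_K\ge C_K$ holds whenever anything with this $S_1$ is feasible. It also attains a maximal sum rate, and it is exactly \eqref{eq:alpha_min_recursive}. So restricting to the affine manifold loses no optimality, which is all your reduction needs.

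If you prefer a local fix, send $\delta$ to user $k+1$ rather than user $K$. That changes only $S_{k+1}$, so only users $k$ and $k+1$ are affected, and the sum rate does not decrease.
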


\begin{proof}

Define the aggregated expressions $A_{\mathrm{sum}} = \sum_{j=1}^K a_j$ and $B_{\mathrm{sum}} = \sum_{j=1}^K b_j$,
such that
\begin{equation}
\sum_{k=1}^K \alpha_k 
= \sum_{k=1}^{K-1} ( \gamma_k \sum_{j>k}\alpha_j + C_k) + \alpha_K=A_{\mathrm{sum}} \alpha_K + B_{\mathrm{sum}}.
\label{eq:Asum_Bsum}
\end{equation}
In the considered scenario, identical minimum data rate requirements are assumed for all users. Accordingly, the corresponding SINR target can be written as $\gamma_k=\gamma,\ \forall k$. Following the recursive expansion, we obtain $A_{\mathrm{sum}}= (1+\gamma)^{K-1}$ and $B_{\mathrm{sum}}= \sum_{i=1}^{K-1} C_{i}(1+\gamma)^{i-1}$. 

Building on Proposition ~\ref{thm:alpha_k}, we define the fractional objective function \eqref{eq:EE_NOMA_pa} as a ratio $\frac{f(\alpha_{K})}{g(\alpha_{K})}$. The numerator is
\begin{equation}
f(\alpha_{K}) = (K-1) R_k^\text{min} + B \log_2(1+ \frac{P_t|g_K|^2}{N\sigma^2} \alpha_K).
\end{equation}
Utilizing \eqref{eq:Asum_Bsum}, the denominator becomes
\begin{equation}
g(\alpha_{K})
=
\frac{P_t}{\eta}(A_{\mathrm{sum}}\alpha_K + B_{\mathrm{sum}})
+ P_c
=
D_0 + D_1\alpha_K,
\end{equation}
with $P_c =P_\mathrm{static}+ N P_\mathrm{act}$, $D_0=\frac{P_t}{\eta}B_{\mathrm{sum}} + P_c$ and $D_1=\frac{P_t}{\eta}A_{\mathrm{sum}}$. Therefore, objective function~\eqref{eq:EE_NOMA_pa} is transformed into a function of $\alpha_K$.
\end{proof} 

With Proposition~\ref{thm:alpha_k} and \ref{thm:opt_pa}, the $i$th Dinkelbach iteration solves for
\begin{equation}
\max_{\alpha_{K,(i)}} \quad 
f(\alpha_{K,(i)})-\beta^{(i-1)}(D_0 + D_1 \alpha_{K,(i)}).
\label{eq:fbeta}
\end{equation}
According to \eqref{eq:fbeta}, the stationary point of $\alpha_K$ is 
\vspace{-0.4em}
\begin{equation}
\alpha_{K,(i)}^{\star}
=
\frac{B}{
\ln2\ \beta^{(i-1)} \frac{P_t}{\eta}(1+\gamma)^{K-1}
}
-
\frac{N\sigma^2}{P_t|g_K|^2}.
\label{eq:ak_closed}
\end{equation}

Using the global power constraint \eqref{eq:sum_alpha} together with
\eqref{eq:Asum_Bsum}, we obtain $A_{\mathrm{sum}}\alpha_K + B_{\mathrm{sum}} \le 1$. Thus, the maximum feasible $\alpha_K$ is $\alpha_K^{\max} =\min\!\left(1,\, \frac{1 - B_{\mathrm{sum}}}{A_{\mathrm{sum}}}\right)$, while $\alpha_K^{\min}$ can be derived from \eqref{eq:alpha_min_recursive} as $\alpha_K^{\min} = \gamma \frac{\sigma^2 N}{P_t |g_K|^2}$. Hence, the feasible interval of $\alpha_K$ is $[\alpha_K^{\min},\ \alpha_K^{\max}]$. 
Since \eqref{eq:fbeta} is strictly concave on the interval $[\alpha_K^{\min},\alpha_K^{\max}]$, there exists a unique global maximum over that interval. Finally, the optimal solution at the $i$-th iteration is
\vspace{-0.4em}
\begin{equation}
\alpha_{K,(i)}^{\star} =\min\{\alpha_K^{\max},\max\{\alpha_K^{\min},\alpha_{K,(i)}^{\star}\}\}.
\label{alphakstar}
\end{equation}

\vspace{-0.8em}
\subsection{Dinkelbach-based Inner-layer Power Allocation Procedure}

Based on the above derivation, the inner-layer power allocation is solved by a Dinkelbach-based iterative procedure. 
The initialization $\beta^{0}$ is chosen as the EE corresponding to an equal-power allocation $\boldsymbol{\alpha}^{(0)}$, where $\alpha_{k,(0)} = \frac{1}{K},\ \forall k$, and
\vspace{-0.6em}
\begin{equation}
    \beta^{(0)}
    = \frac{\sum_{k=1}^K
        B \log_2(1 + \frac{\frac{P_t}{N} |g_k|^2 \alpha_{k, (0)}}{\frac{P_t}{N}|g_k|^2\sum_{j>k}\alpha_{j, (0)} + \sigma^2})}{
        P_{\mathrm{static}}
        + \frac{P_t}{\eta}\sum_{k=1}^K \alpha_{k, (0)}
        + NP_{\mathrm{act}}
    }.
\label{beta0NOMA}
\end{equation}
At the $i$-th Dinkelbach iteration, $\alpha_{K, (i)}^{\star}$ is firstly calculated according to \ref{eq:ak_closed} and \ref{alphakstar}), and the power coefficients of users $k\neq K$ are then calculated according to \eqref{eq:alpha_min_recursive}. The Dinkelbach parameter in iteration $i$ is updated by
\begin{equation}
\beta^{(i)}
=
\frac{
f(\alpha_{K, (i)}^{\star})
}{
g(\alpha_{K, (i)}^{\star}).
}
\label{betaUpdateNOMA}
\end{equation}
The iteration terminates once
$\left|
\beta^{(i)} - 
\beta^{(i-1)} 
\right|
\le \varepsilon,
\label{eq:dinkelbach_stop}$
where $\varepsilon$ is a small threshold (e.g., $10^{-6}$).

\vspace{-1.2em}
\subsection{Impact of $P_{\mathrm{act}}$ on Power Allocation}
To investigate the impact of $P_{\mathrm{act}}$ on  power allocation, the two-user case is considered with $|g_1|^2 \leq |g_2|^2$.
According to \eqref{eq:alpha_min_recursive}, the sum of power allocation factors equals $\alpha_1+\alpha_2=(1+\gamma)\alpha_2+C_1$.

When the transmit power budget is stringent, the EE solution will utilize all the available power. Therefore $\alpha_1^{\star}+\alpha_2^{\star}=1$, with $\alpha_1^{\star}=\frac{\gamma+C_1}{1+\gamma}$ and $\alpha_2^{\star}=\frac{1-C_1}{1+\gamma}$.
Since neither $\alpha_1^{\star}$ nor $\alpha_2^{\star}$ contains $P_{\mathrm{act}}$, the power allocation factors are independent of $P_{\mathrm{act}}$ when $P_t$ is small. 
In contrast, when the transmit power budget is larger than necessary, $\alpha_1^{\star}+\alpha_2^{\star} < 1$, and the power allocation factors are determined by the Dinkelbach algorithm. At the $(i-1)$-th
 iteration, the parameter $\beta^{(i-1)}$ in \eqref{betaUpdateNOMA} becomes
\begin{equation}
\beta^{(i-1)}
=
\frac{
R_1^{\min}
+
B\log_2
\left(
1+
\frac{P_t|g_2|^2}
{N\sigma^2}
\alpha_{2, (i-1)}
\right)
}{
\frac{P_t}{\eta}
\left[
(1+\gamma)\alpha_{2, (i-1)}+C_1
\right]
+
P_{\mathrm{static}}
+
NP_{\mathrm{act}}
}.
\label{eq:loose_beta}
\end{equation}
For a given $\alpha_{2, (i-1)}$, increasing $P_{\mathrm{act}}$
increases the denominator of \eqref{eq:loose_beta},  thereby reducing $\beta^{(i-1)}$.
According to \eqref{eq:ak_closed},
${\alpha}_{2,(i)}$ is inversely related to
$\beta^{(i-1)}$ and thus, decreasing $\beta^{(i-1)}$ increases ${\alpha}_{2,(i)}^{\star}$, which linearly increases $\alpha_{1,(i)}^{\star}$ based on \eqref{eq:alpha_min_recursive}.
This shows in the case of ample transmit power budget, all power allocation factors increases with increasing $P_{\mathrm{act}}$. In other words when there are additional power available, the EE solution will increase the transmit power to achieve a higher rate in order to compensate the increased activation power.

\vspace{-1.2em}
\section{Simulation results}
\begin{figure}[!t]
\centering
\includegraphics[width=0.7\columnwidth]{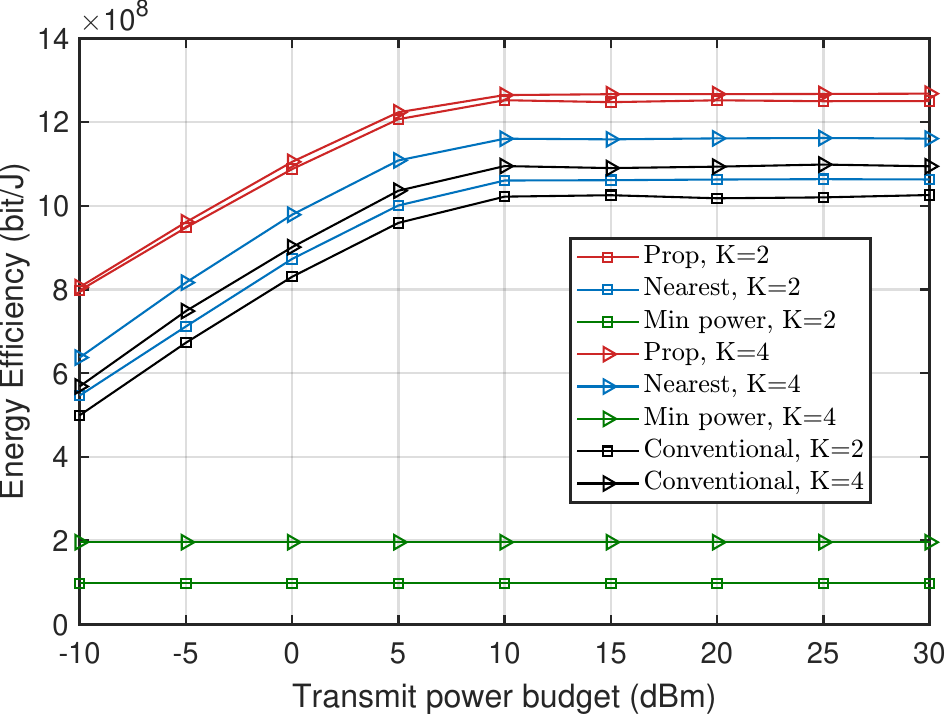}
\caption{EE performance for the proposed algorithm and the different benchmarks.}
\label{Baselines}
\end{figure}
This section presents the simulation results of the proposed joint PA selection and power allocation algorithm (denoted as ``Prop"). We set $D_1= 10\mathrm{m}$, $D_2= 20\mathrm{m}$, the length of waveguide is equal to $D_2$, $d= 3\mathrm{m}$, $L = 40$, the total bandwidth of the system is $B=10~\mathrm{MHz}$, the carrier frequency $f_c=28~\mathrm{GHz}$, $N_0=-174~\mathrm{dBm/Hz}$, $n_\mathrm{eff} = 1.4$, $\kappa = 0.02~\mathrm{dB/m}$, $P_{\mathrm{static}} = 0.1~\mathrm{W}$ and $P_\text{act} = 3~\mathrm{dBm}$ unless otherwise stated. 

For comparison, several benchmark schemes employing NOMA are considered:
(1) Conventional antennas located at $\psi_n^\text{Con} = \psi_0$ with optimal power allocation (denoted as ``Conventional").
(2) Minimum transmit power satisfying the rate constraints (denoted as ``Min power") \cite{EEminimization0}.
(3) Selecting the nearest PAs for each users with optimal power allocation (denoted as ``Nearest").
(4) Exhaustively searching over all feasible activated PAs sets with optimal power allocation (denoted as ``Exhaustive").

Fig.~\ref{Baselines} compares the EE performance of the proposed scheme with the benchmarks against $P_t$ when $K$ equals 2 and 4. It is observed that the EE monotonically increases when $P_t \le 10$ dBm, and then saturates, as further increasing the transmit power achieves marginal rate gain comparing to the linear increase in total energy consumption. The proposed algorithm also outperforms conventional antennas even though the latter does not have the PA activation power. This highlights the clear advantage of the proposed NOMA PASS over the conventional setup.
Additionally, the EE performance of the considered system increases with the number of users.
The proposed algorithm also achieves higher EE than the ``Nearest" scheme, confirming the benefit of the matching design and the importance of activated PAs selection optimization.
Moreover, it is also clear that the minimum power strategy performs poorly in EE as it does not balance between rate and power consumption.


\begin{figure}[!t]
\centering
\includegraphics[width=1\columnwidth]{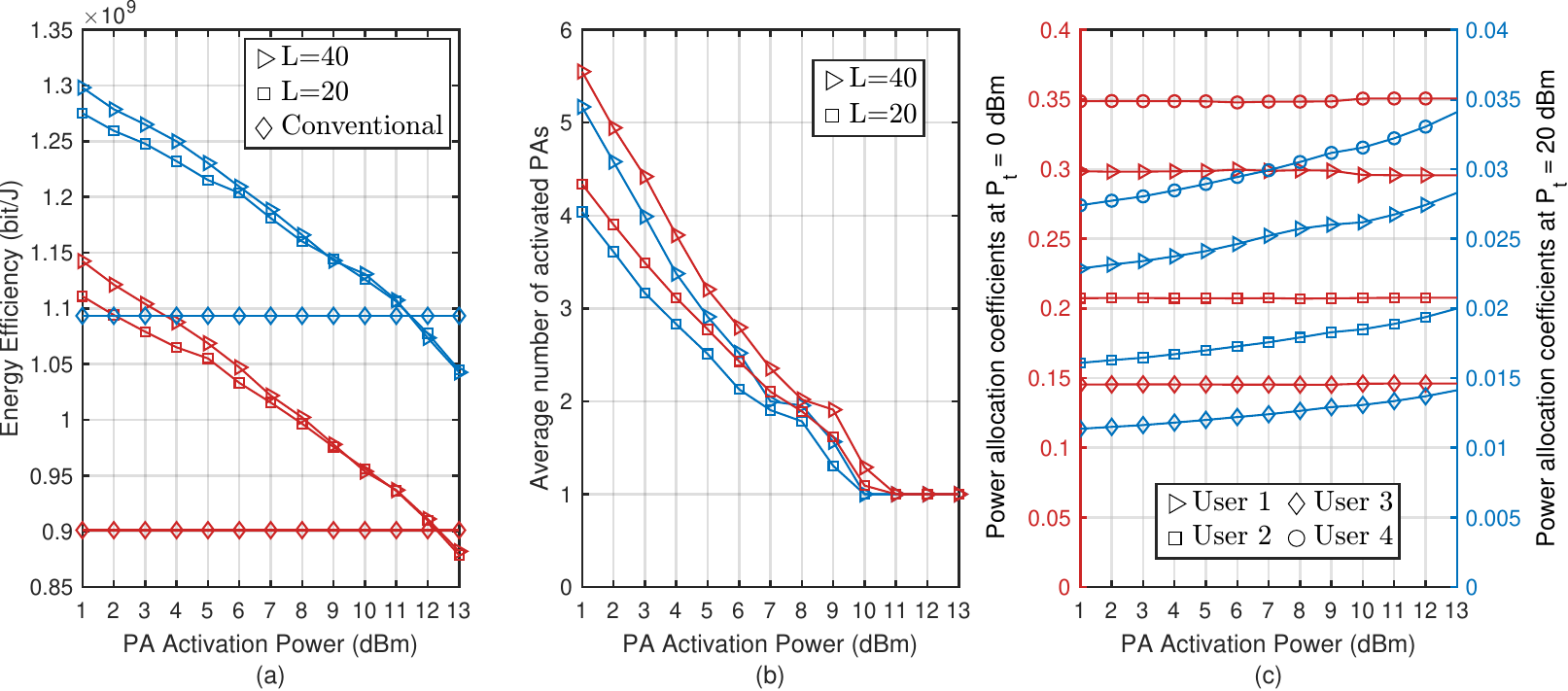}
\caption{System performance versus $P_{\mathrm{act}}$ under different
$P_t$ values with $K=4$:
(a) EE, (b) $N$ and (c) $\boldsymbol{\alpha}$ in the ``Prop" scheme. Red and blue curves denote the corresponding schemes at ${P}_{t}$ = 0 dBm and 20 dBm, respectively. 
} 
\label{NOMA system}
\end{figure}


Fig.~\ref{NOMA system}(a) shows the EE performance of the proposed algorithm versus $P_t$ for different $P_\text{act}$ values and PA deployment densities.  The EE decreases monotonically as $P_\text{act}$ increases, and a crossover point appears beyond which the proposed PASS becomes less energy-efficient than the conventional antenna system, highlighting the importance of power-efficient PA activation mechanism. In addition, PA density has a stronger impact on EE at low $P_\text{act}$, while this impact weakens at high $P_\text{act}$ because activation power dominates the total consumption and reduces the average number of activated PAs. As shown in Fig.~\ref{NOMA system}(b), the number of activated PAs gradually decreases with increasing $P_\text{act}$ and eventually reduces to one, implying that the activation cost of additional PAs outweighs their marginal rate gain. It can be observed that all schemes exhibit a slight plateau when the average number of activated PAs is close to 2. This occurs because, in this region, the feasible activation options are limited to 1 to 3 with no smaller configuration available; hence, although the activation ratio does vary, the resulting average remains clustered around 2. In addition, the number of activated PAs is larger at  $P_t$ = 0 dBm than 20 dBm, as more PAs are needed to achieve the rate requirements when the power budget is small, which also contributes to a lower EE.

Fig.~\ref{NOMA system}(c) demonstrates the impact of $P_\text{act}$ on power allocation. When $P_\mathrm{act}$ increases, the power allocation coefficients remain unchanged under stringent power budget at $P_t $ = 0 $\mathrm{dBm}$, while slightly increase when $P_t$ is larger than necessary at 20 $\mathrm{dBm}$, but with smaller values for lower transmission power. This behavior is consistent with the the analysis presented in Section~\uppercase\expandafter{\romannumeral4}-C. 
Nevertheless, Fig 3(a) shows that when the activation power is high, the achievable EE becomes nearly identical in both cases, indicating that $P_\mathrm{act}$ dominates the power consumption and substantially diminish the potential EE gain even when additional transmit power budget is available.

Fig.~\ref{NOMA convergence} shows the convergence behavior of the proposed algorithm. In Fig.~\ref{NOMA convergence}(a), EE increases monotonically and approaches the exhaustive search result. A larger number of candidate PAs results in slightly higher EE but slower convergence due to the enlarged search space. In Fig.~\ref{NOMA convergence}(b), the average number of activated PAs gradually decreases and stabilizes near the exhaustive search result, showing that redundant activated PAs can be effectively removed while preserving near-optimal EE. The results further indicate that a strict one-to-one mapping between users and PAs is not required, since a smaller PA set can optimally serve multiple users with lower power consumption.

\vspace{-0.6em}
\section{Conclusions}

In this paper, we investigated an EE maximization problem with PA activation power consumption for a NOMA-assisted downlink PASS. 
A two-layer optimization algorithm is proposed with a matching-based PA activation procedure, and the Dinkelbach-based optimal power allocation. Numerical results demonstrate that the proposed algorithm converges rapidly and achieves EE performance close to the exhaustive-search optimal results and consistently outperforms other baselines. It further illustrates the importance of including PA activation power in EE maximization, and the need for low PA activation mechanism for an efficient PASS.

\begin{figure}[!t]
\centering
\includegraphics[width=0.7\columnwidth]{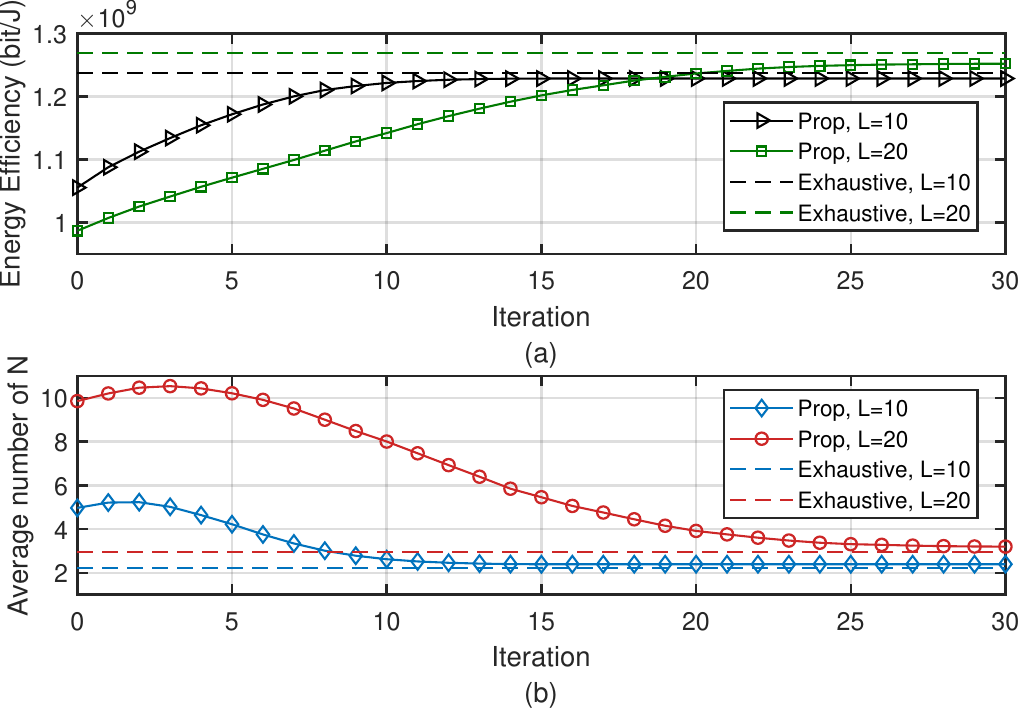}
\caption{Convergence performance of the proposed algorithm with $P_t = 20$dBm, $K = 5$ and $P_\text{act} = 3$ dBm.}
\label{NOMA convergence}
\end{figure}

\newpage

\vfill


\vspace{-0.6em}
\begin{thebibliography}{1}
\bibliographystyle{IEEEtran}

\bibitem{DOCOMO}
A. Fukuda et al., ``Pinching antenna: Using a dielectric waveguide as an
antenna," \textit{NTT DOCOMO Tech. J.}, vol. 23, no. 3, pp. 5–12, Jan. 2022.

\bibitem{Tutorial}
Y. Liu et al., ``Pinching-antenna systems (PASS): A tutorial," \textit{IEEE
Trans. Commun.}, vol. 74, pp. 4881–4918, 2026.

\bibitem{PAP}
Z. Ding, R. Schober and H. Vincent Poor, ``Flexible-antenna systems: A pinching-antenna perspective," \textit{IEEE Transactions on Communications}, vol. 73, no. 10, pp. 9236-9253, Oct. 2025

\bibitem{Sumrate1}
Z. Zhou, Z. Yang, G. Chen, and Z. Ding, ``Sum-rate maximization
for noma-assisted pinching-antenna systems," \textit{IEEE Wireless Commun.
Lett.}, vol. 14, no. 9, pp. 2728–2732, Sep. 2025.

\bibitem{Matching}
K. Wang, Z. Ding, and R. Schober, ``Antenna activation for NOMA-assisted pinching-antenna systems,'' \emph{IEEE Wireless Commun. Lett.}, vol. 14, no. 5, pp. 1526--1530, May 2025.

\bibitem{RSMA}
P. Wang, X. He, H. Wang, and R. Song, “Sum rate maximization for pinching antennas assisted RSMA system with multiple waveguides,” \emph{IEEE Trans. Veh. Technol.}., Early Access, Dec. 2025.

\bibitem{EEmax0}
M. Zeng, J. Wang, G. Zhou, F. Fang, and X. Wang, ``Energy-efficient design for downlink pinching-antenna systems with QoS guarantee,'' \emph{IEEE Trans. Veh. Technol.}, Early Access, Aug. 13, 2025.

\bibitem{EEuplink}
M. Zeng, X. Li, J. Wang, G. Huang, O. A. Dobre, and Z. Ding, ``Energy-efficient resource allocation for NOMA-assisted uplink pinching-antenna systems,'' \emph{IEEE Wireless Commun. Lett.}, vol. 14, no. 11, pp. 3695--3699, Nov. 2025.


\bibitem{EEminimization0}
S. Mohammadzadeh, K. Cumanan, C. Li, and Z. Ding, ``Efficient downlink power allocation for NOMA-based pinching-antenna systems,'' \emph{IEEE Wireless Commun. Lett.}, vol. 14, no. 12, pp. 4187--4191, Dec. 2025.


\bibitem{EEminBF0}
Z. Wang, C. Ouyang, X. Mu, Y. Liu, and Z. Ding, ``Modeling and beamforming optimization for pinching-antenna systems,'' \emph{IEEE Trans. Commun.}, vol. 73, no. 12, pp. 13904--13919, Dec. 2025.

\bibitem{EEminBF1}
D. Gan, X. Xu, J. Zuo, X. Ge, and Y. Liu, ``Joint beamforming for NOMA assisted pinching antenna systems (PASS),'' \emph{IEEE Trans. Commun.}, vol. 74, pp. 2450--2465, 2026.

\bibitem{JRBOM}
Y. Xu, D. Xu, X. Yu, S. Song, Z. Ding, and R. Schober, ``Joint radiation power, antenna position, and beamforming optimization for pinching-antenna systems with motion power consumption,'' \emph{IEEE Trans. Wireless Commun.}, vol. 25, pp. 7825--7841, 2026.

\bibitem{Fu_GLOBECOM}
Y. Fu, F. He, Z. Shi, and H. Zhang, ``Power minimization for NOMA-assisted pinching antenna systems with multiple waveguides,'' in \emph{Proc. IEEE Global Commun. Conf. (GLOBECOM)}, Taipei, Taiwan, 2025, pp. 170--175.

\bibitem{SEEE-tradeoff0}
Z. Zhou, Z. Wang, and Y. Liu, ``Spectral and energy efficiency tradeoff for pinching-antenna systems,'' \emph{arXiv preprint} arXiv:2510.25192, 2025.

\bibitem{SEEE-tradeoff1}
G. Zhu, X. Mu, L. Guo, \emph{et al.}, ``SE--EE tradeoff in pinching-antenna systems: Waveguide multiplexing or waveguide switching?,'' \emph{arXiv preprint} arXiv:2601.04844, 2026.


\bibitem{IncludePact}
X. Gan, Z. Wang, and Y. Liu, ``Dual-scale antenna deployment for pinching antenna systems,'' \emph{arXiv preprint} arXiv:2510.27185, 2025.

\bibitem{Optimal power allocation}
S. Rezvani, E. A. Jorswieck, R. Joda, and H. Yanikomeroglu, ``Optimal power allocation in downlink multicarrier NOMA systems: Theory and fast algorithms,'' \emph{IEEE J. Sel. Areas Commun.}, vol. 40, no. 4, pp. 1162--1189, Apr. 2022.




\end{thebibliography}
\end{document}